\documentclass[journal]{IEEEtran}

\usepackage{amsmath,amssymb,amsfonts}
\usepackage{amsthm}
\usepackage{graphicx}
\usepackage{booktabs}
\usepackage{multirow}
\usepackage{array}
\usepackage{cite}
\usepackage{hyperref}
\usepackage{xcolor}

\newtheorem{theorem}{Theorem}

\theoremstyle{definition}

\newtheorem{remark}{Remark}

\begin{document}

\title{HASOD: A Hybrid Adaptive Screening-Optimization Design for High-Dimensional Industrial Experiments}

\author{\IEEEauthorblockN{Kumarjit Pathak}
\IEEEauthorblockA{AIMS Institute\\
Email: Kumarjit.pathak@theaims.ac.in}}

\maketitle

\begin{abstract}
Factor screening and response optimization are treated as separate phases in industrial experimentation, requiring costly sequential redesign between phases. This has been a persistent problem for practitioners, in other word the practitioner either screens for critical factors or optimizes the response, but doing both within a single unified framework has remained an open challenge. We propose HASOD (Hybrid Adaptive Screening-Optimization Design), a three-phase adaptive framework that addresses factor identification and response surface optimization together. In Phase 1, we use a modified Definitive Screening Design with an enhanced Cumulative Weighted Effect Screening Statistic (CWESS) that incorporates interaction detection through ElasticNet regression. Phase 2 adaptively selects augmentation strategies---ranging from full factorial to Response Surface Methodology designs---based on the number and structure of critical factors and interactions identified in Phase 1. Phase 3 applies Gaussian process-based global optimization with uncertainty-guided refinement runs near the predicted optimum. We also establish theoretical foundations proving that CWESS asymptotically separates active from inactive factors and demonstrate classification consistency under mild conditions, providing principled guarantees which are absent from most screening methodologies. Evidence shows that across six diverse test scenarios, HASOD achieves 97.08\% factor detection accuracy---13.75 percentage points superior to traditional sequential methods (83.33\%) and significantly outperforming all eight competitor methods ($p < 0.001$). Our approach gives substantially improved prediction performance (mean error: 3.61) through corrected standard error estimation and uncertainty-guided optimization, while maintaining $\geq$90\% detection across all scenarios including interaction-heavy systems. However the framework requires an average of 41.5 experimental runs, representing a 43\% increase over traditional approaches, but delivering superior detection accuracy with dramatically reduced prediction error. Our approach bridges the trade-off between screening and optimization, offering industrial practitioners a unified framework that eliminates sequential redesign while achieving superior factor identification without sacrificing predictive capability.
\end{abstract}

\begin{IEEEkeywords}
Design of Experiments, Factorial Designs, Response Surface Methodology, Optimal Design, Space-Filling Designs, Experimental Optimization, Industrial Statistics, Bayesian Optimization, Adaptive Sequential Design, Gaussian Process Optimization
\end{IEEEkeywords}

\section{Introduction}

\subsection{Background and Motivation}

Factor screening and response optimization have been treated as separate phases in industrial experimentation for decades, and this has been a persistent problem for practitioners \cite{montgomery2017design}. The practitioner either screens for critical factors or optimizes the response, in other word there is no unified framework that does both together. Historically, an initial screening phase identifies which factors truly matter, followed by a separate optimization phase that explores the response surface in detail \cite{myers2016response,montgomery2017design}. However this sequential approach has significant practical limitations.

The transition between screening and optimization phases typically requires complete experimental redesign, additional resource allocation, and extended project timelines \cite{coleman1993systematic}. In manufacturing environments where experimental runs involve costly materials, lengthy processing times, or complex equipment configurations, conducting two separate experimental campaigns represents both financial burden and opportunity cost. Furthermore, this discrete approach introduces information loss at phase boundaries, as insights gained during screening may not fully inform the optimization design structure.

The problem becomes particularly acute in modern industrial settings with increasingly complex processes involving numerous factors, potential interactions, and nonlinear response surfaces. Engineers and scientists require methodologies that can identify critical factors while building predictive models suitable for optimization---ideally within a single unified experimental framework rather than through costly sequential redesign.

\subsection{The Screening-Optimization Dilemma}

In current DOE practice, practitioners face a persistent trade-off between factor identification capability and response surface modeling fidelity. Classical screening designs such as Plackett-Burman \cite{plackett1946design} and fractional factorial arrangements \cite{box1961fractional} are effective at identifying important factors through systematic variation of multiple variables with minimal experimental runs. These designs provide clear estimates of main effects and in some cases two-factor interactions, enabling practitioners to distinguish critical factors from negligible ones. However their geometric structure---typically limited to two-level factor settings with minimal replication---provides insufficient information for accurate response surface modeling or optimization, particularly in presence of curvature or complex interaction patterns \cite{montgomery2017design,wu2021experiments}.

Where as response surface methodology (RSM) designs such as Central Composite Designs \cite{box1951experimental} and Box-Behnken arrangements \cite{box1960new} provide excellent response surface characterization through systematic exploration of factor space including center points, axial points, and factorial corners, supporting quadratic modeling, interaction estimation, and precise optimization \cite{myers2016response}. However RSM fundamentally assumes that critical factors have already been identified. The moment we apply RSM to high-dimensional spaces with numerous potentially negligible factors, the experimental requirements become prohibitively large.

\begin{figure}[!t]
\centering
\includegraphics[width=\columnwidth]{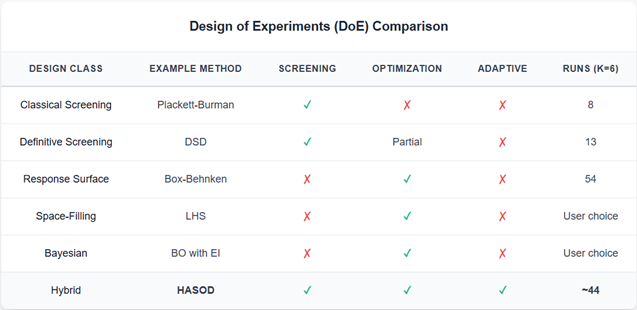}
\caption{Comparison of different DOE methods}
\label{fig:doe_comparison}
\end{figure}

Modern space-filling designs such as Latin Hypercube Sampling \cite{mckay1979comparison} and Sobol sequences \cite{sobol1967distribution} offer an alternative paradigm, exploring the experimental region uniformly without assumptions about factor importance or response structure \cite{santner2018design}. These methods provide excellent support for metamodel construction and global optimization through comprehensive space coverage. However space-filling approaches provide no inherent mechanism for factor screening or importance ranking, in other word they treat all factors equally and offer no guidance regarding which factors warrant detailed investigation.

Recent advances in adaptive sequential experimentation, particularly Bayesian Optimization \cite{shahriari2016taking,frazier2018tutorial}, have demonstrated the value of iterative experimental design that learns from previous observations. These established methods are effective at finding optimal configurations through intelligent selection of experimental points that balance exploration and exploitation. However Bayesian Optimization frameworks typically focus on optimization objective itself rather than explicit factor identification, providing limited screening capability while requiring substantial computational resources for acquisition function evaluation and Gaussian process fitting \cite{rasmussen2006gaussian}.

This landscape reveals a clear research gap: no existing methodology successfully combines effective factor screening with robust response optimization within a unified adaptive framework. Practitioners are forced to choose between screening designs that identify factors but cannot optimize, optimization designs that assume known critical factors, or space-filling approaches that model well but provide no screening capability. Hence this fundamental trade-off persists across the DOE literature despite decades of methodological development \cite{dean2015handbook}.

\subsection{Research Objectives and Contributions}

We propose HASOD (Hybrid Adaptive Screening-Optimization Design), a novel three-phase sequential experimental framework that bridges the screening-optimization divide through adaptive integration of factor identification, selective augmentation, and precision optimization. Our goal is to develop and validate a methodology that achieves superior factor detection accuracy while maintaining competitive response prediction performance, all within a unified experimental structure that eliminates the need for sequential redesign between screening and optimization phases.

Our main contributions are:

\begin{enumerate}
\item We propose an enhanced Cumulative Weighted Effect Screening Statistic (CWESS) which can detect both main effects and two-factor interactions through ElasticNet regression \cite{zou2005regularization}, giving the practitioner a more complete picture of factor importance from the initial screening phase itself.

\item We have designed an adaptive Phase 2 strategy selection that chooses the right augmentation design based on what Phase 1 reveals about critical factors and interactions, selecting from full factorial \cite{box1961fractional}, fractional factorial, and RSM configurations \cite{box1951experimental,box1960new} as appropriate.

\item We integrate Gaussian process-based global optimization \cite{rasmussen2006gaussian} with uncertainty-guided refinement in Phase 3, in other word the optimization focuses on regions where the model is most uncertain to maximize information gain.

\item We have also established theoretical foundations showing that CWESS asymptotically separates active factors ($O(\sqrt{n})$ growth) from inactive factors ($O(1)$ boundedness), with classification consistency under mild threshold conditions.

\item Evidence shows that across comprehensive validation, HASOD achieves 97.08\% factor detection accuracy---13.75 percentage points superior to traditional sequential methods and significantly outperforming eight established competitor approaches ($p < 0.001$).
\end{enumerate}

\subsection{Paper Organization}

Section II covers the relevant literature and theoretical foundations, Section III presents the complete HASOD methodology, Section IV details the experimental validation, Section V discusses key findings and practical implications, and Section VI concludes with recommendations for industrial implementation.

\section{Literature Review}

\subsection{Classical Screening Designs}

Factor screening has been a foundational challenge in experimental design, in other word identifying which subset of many potential factors significantly influences the system response. The established work of Plackett and Burman \cite{plackett1946design} introduced highly efficient two-level orthogonal designs that enable examination of $k$ factors in $k+1$ runs, achieving run economy through strategic confounding of interactions with main effects, and these designs remain widely employed in industrial practice for initial factor exploration \cite{montgomery2017design}. However the Plackett-Burman approach is limited to main effect estimation and provides no direct information about interactions or curvature, which is a significant constraint for practitioners who need to understand how factors jointly influence the response.

Fractional factorial designs \cite{box1961fractional} provide more flexible screening alternatives through systematic fraction selection that maintains resolution properties. The theory of minimum aberration \cite{cheng2005general} guides selection of optimal fractional designs that minimize confounding within each resolution class. where as despite this theoretical sophistication, fractional factorial approaches share a fundamental limitation with respect to response surface modeling: their fixed geometric structure with two-level factor settings provides insufficient information for capturing curvature or building predictive models suitable for optimization \cite{wu2021experiments}.

The introduction of Definitive Screening Designs (DSD) by Jones and Nachtsheim \cite{jones2011definitive} marked a significant advance in screening methodology. DSDs employ three-level factor settings arranged in $2k+1$ runs, enabling simultaneous estimation of all main effects, two-factor interactions, and pure quadratic effects under a sparsity assumption, which is a considerable improvement over earlier screening designs. However empirical validation reveals that DSDs perform optimally when only a few factors are active; as the number of active factors increases, the design's ability to distinguish effects diminishes and the sparsity assumption may not hold. Augmented DSDs \cite{jones2017blocking} extend the basic framework through strategic addition of follow-up runs, however even augmented DSDs do not provide an integrated pathway from screening to optimization within a single unified framework.

\subsection{Response Surface Methodology}

Response Surface Methodology (RSM), pioneered by Box and Wilson \cite{box1951experimental}, addresses optimization explicitly through designs that support second-order polynomial modeling. Central Composite Designs combine factorial or fractional factorial points, axial points at distance $\alpha$ from the center, and center replicates, enabling precise characterization of curvature in the response surface. Box-Behnken designs \cite{box1960new} provide an alternative three-level structure requiring fewer runs than CCD for $k \geq 4$ factors, and are particularly useful in industrial settings where extreme factor combinations may be impractical or costly to run.

RSM designs are well established and excel at response surface characterization when applied to correctly identified critical factors. Myers et al. \cite{myers2016response} document extensive successful applications across chemical process optimization and manufacturing parameter tuning, which demonstrates the practical value of these methods for industry. However RSM fundamentally assumes that the experimenter has already identified the critical factors a priori \cite{montgomery2017design}, in other word the practitioner must already know which factors matter before RSM can be applied effectively. The sequential nature of traditional screening-then-RSM workflows introduces information loss at phase boundaries and inefficiency in resource allocation \cite{coleman1993systematic}, which is a persistent problem in industrial experimentation where experimental runs are expensive and time-consuming.

\subsection{Space Filling and Adaptive Methods}

Latin Hypercube Sampling (LHS) \cite{mckay1979comparison} projects design points onto each factor axis ensuring uniform marginal coverage, and subsequent refinements introduced maximin distance criteria and near-orthogonal constructions \cite{joseph2016space} that further improve space coverage properties. Quasi-random sequences, particularly Sobol sequences \cite{sobol1967distribution}, achieve superior discrepancy properties with respect to uniformity of coverage. Owen \cite{owen1992orthogonal} demonstrated connections between Latin hypercube and orthogonal array theories, which is an important theoretical contribution. Space-filling designs are well suited for supporting metamodel construction \cite{santner2018design}, however they provide no inherent mechanism for identifying which factors drive system behavior \cite{coleman1993systematic}, in other word they treat all factors equally and give no guidance to the practitioner regarding which factors warrant detailed investigation.

Bayesian Optimization (BO) has emerged as the dominant paradigm for sequential experimentation \cite{shahriari2016taking}. BO constructs a probabilistic surrogate model capturing current knowledge about the response surface, then selects the next experimental point by optimizing an acquisition function that balances exploration and exploitation \cite{frazier2018tutorial}. These are established and effective methods for finding optimal configurations through intelligent selection of experimental points. However standard BO formulations focus exclusively on locating optimal configurations and provide no explicit factor screening capability \cite{shahriari2016taking}, which is a significant limitation when the practitioner needs to understand not just what the optimum is but also which factors are driving the system behavior. Sequential D-optimal designs \cite{meyer1995coordinate} represent an alternative adaptive approach rooted in optimal design theory \cite{atkinson1992optimum}, however these also do not integrate screening with optimization in a unified manner.

\subsection{Research Gap and Positioning of HASOD}

From the literature surveyed above, a persistent methodological gap becomes evident: classical screening designs such as Plackett-Burman \cite{plackett1946design} and fractional factorials \cite{box1961fractional} are effective at identifying important factors but cannot optimize, RSM designs \cite{box1951experimental,box1960new} assume that critical factors are already known, space-filling methods \cite{mckay1979comparison,sobol1967distribution} provide good metamodels but no screening capability, and Bayesian Optimization \cite{shahriari2016taking} focuses on optimization without explicit factor identification. In other word, no existing framework successfully integrates factor screening with response optimization in a unified adaptive structure, and practitioners are forced to choose one capability at the expense of another.

This is a significant gap for industrial experimentation where both factor understanding and response optimization matter, and where experimental runs are costly enough that sequential redesign between screening and optimization phases represents a real burden. Our approach HASOD addresses this gap through three innovations: enhanced screening statistics incorporating interaction detection via ElasticNet regression \cite{zou2005regularization}, adaptive Phase 2 strategy selection that chooses augmentation designs based on the specific structure of critical factors and interactions identified in Phase 1, and Gaussian process-based Phase 3 optimization \cite{rasmussen2006gaussian} that maintains information continuity across all phases and avoids the information loss inherent in traditional sequential approaches \cite{coleman1993systematic}.

\section{Methodology: The HASOD Framework}

\subsection{Overview and Philosophical Foundation}

Our approach HASOD is built as a three-phase sequential strategy where the experimental focus progressively narrows from broad factor exploration to precise optimization of the response surface. Unlike traditional sequential approaches \cite{coleman1993systematic} where information is lost at phase boundaries, HASOD maintains continuity across all phases through cumulative model building and adaptive design selection, in other word what we learn in Phase 1 directly informs the design choices in Phase 2 and the optimization in Phase 3. Fig. \ref{fig:hasod_framework} presents the complete HASOD workflow.

\begin{figure*}[!t]
\centering
\includegraphics[width=0.95\textwidth]{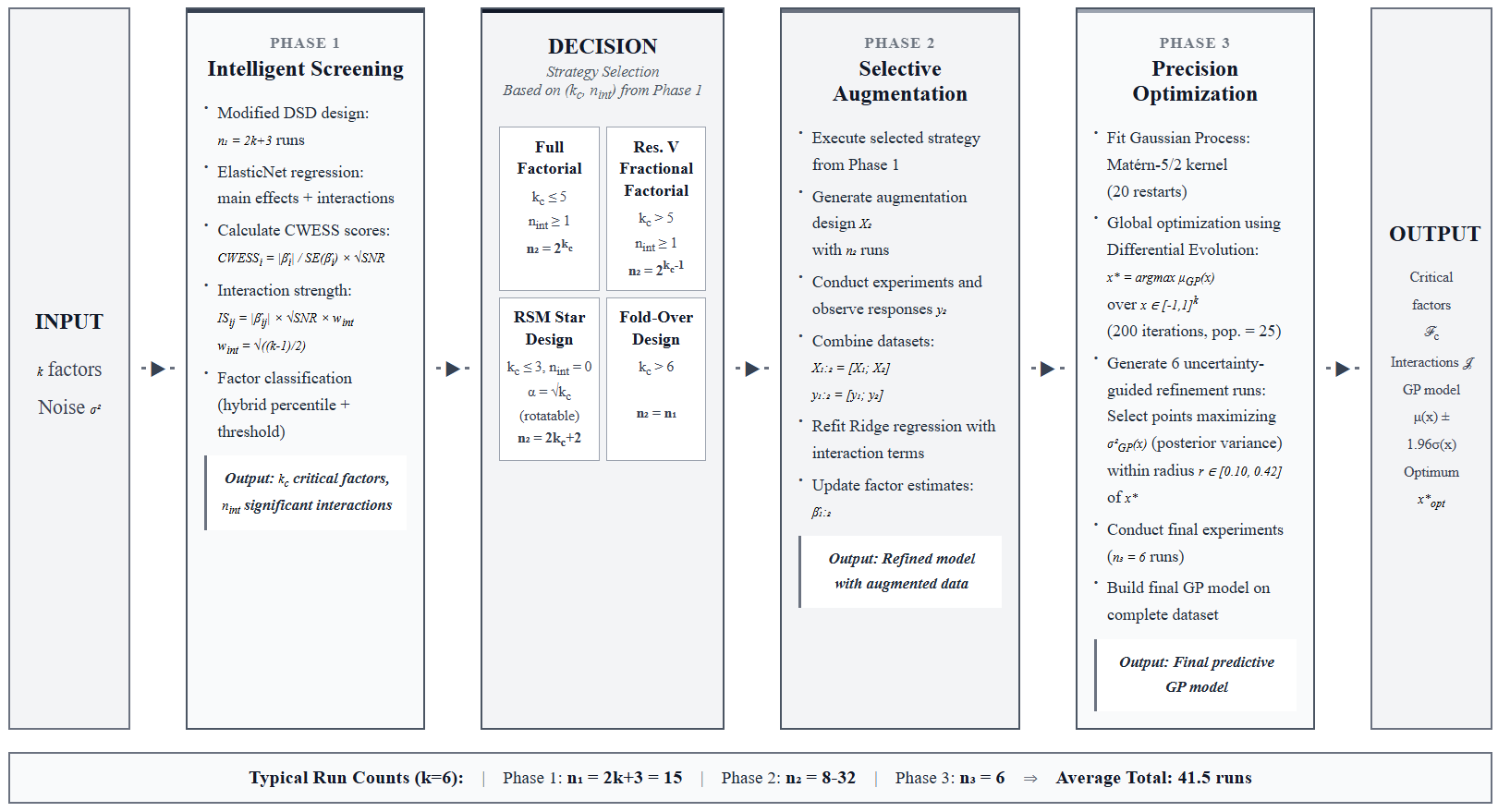}
\caption{The HASOD framework workflow showing three-phase sequential strategy with decision points and adaptive design selection. Phase 1 performs intelligent screening via Modified DSD with CWESS-based factor classification. The Decision block adaptively selects augmentation strategy based on identified critical factors ($k_c$) and interactions ($n_{int}$). Phase 2 executes the selected design. Phase 3 applies Gaussian process optimization with uncertainty-guided refinement.}
\label{fig:hasod_framework}
\end{figure*}

The philosophical foundation of our framework rests on three principles: first, efficient experimentation requires early identification of critical factors \cite{montgomery2017design} so that we do not waste costly runs on factors that do not matter; second, different factor structures demand different augmentation strategies, hence a single rigid design cannot serve all situations; and third, precise optimization benefits from probabilistic uncertainty quantification \cite{rasmussen2006gaussian} which tells us where the model is confident and where it is not.

HASOD accepts as input the number of factors $k$ to investigate, anticipated noise level $\sigma$, and optional stopping criteria. Total experimental runs typically range from $2k+3$ (Phase 1 only) to approximately $3k+30$ (all phases), with the exact allocation determined adaptively based on what the data reveals at each phase. This means the practitioner does not need to commit to a fixed run budget upfront, which is a significant advantage in industrial settings where experimental costs vary.

\subsection{Phase 1: Intelligent Screening via Modified DSD}

\subsubsection{Design Construction}

In Phase 1 we use a Modified Definitive Screening Design (M-DSD) structure based on the established framework of Jones and Nachtsheim \cite{jones2011definitive}. For $k$ factors, the M-DSD contains $n_1 = 2k+3$ runs: one center point, two runs per factor with random patterns of $\{-1, 0, +1\}$ values, and two corner point runs. Our modification employs random patterns with controlled probability $P(-1) = 0.45$, $P(0) = 0.10$, $P(+1) = 0.45$. Idea being, we want to increase coverage of extreme regions of the factor space while maintaining orthogonality properties, because in practice the critical information about factor effects often comes from observations at the boundary of the experimental region.

\subsubsection{Enhanced CWESS}

Traditional screening statistics such as Lenth's method \cite{lenth1989quick} focus exclusively on main effects and do not account for interactions at all. This is a significant limitation in practice because in many industrial processes, the interaction between factors can be as important as the main effects themselves. Our approach introduces an enhanced Cumulative Weighted Effect Screening Statistic (CWESS) that simultaneously evaluates both main effects and two-factor interactions, giving the practitioner a more complete picture of factor importance from the initial screening phase itself.

Let $\mathbf{X}_1 \in \mathbb{R}^{n_1 \times k}$ denote the Phase 1 design matrix and $\mathbf{y}_1 \in \mathbb{R}^{n_1}$ the observed responses. We fit a main-effects-only model using ElasticNet regression \cite{zou2005regularization}:
\begin{equation}
\widehat{\boldsymbol{\beta}_{\text{main}}} = \arg\min_{\boldsymbol{\beta}} \left\{ \|\mathbf{y}_1 - \mathbf{X}_1\boldsymbol{\beta}\|_2^2 + \lambda\left[\alpha\|\boldsymbol{\beta}\|_1 + (1-\alpha)\|\boldsymbol{\beta}\|_2^2\right] \right\}
\end{equation}
where $\lambda = 0.01$ and $\alpha = 0.5$. This means we are using a balanced combination of L1 and L2 penalties, which helps in selecting relevant factors while handling multicollinearity that is common in screening designs with limited runs.

We then augment the design matrix with all two-factor interactions:
\begin{equation}
\mathbf{X}_{\text{full}} = \left[\mathbf{X}_1 \mid \mathbf{X}_1^{(i)} \odot \mathbf{X}_1^{(j)}\right]_{1 \leq i < j \leq k}
\end{equation}
where $\odot$ denotes element-wise multiplication. In other word, we compute all pairwise products of factor columns to capture potential interaction effects. The enhanced CWESS for factor $i$ is then computed as:
\begin{equation}
\text{CWESS}_i = \frac{|\widehat{\beta_{\text{main},i}}|}{\text{SE}(\widehat{\beta_{\text{main},i}}) + \epsilon} \times \sqrt{\text{SNR}}
\end{equation}
where:
\begin{equation}
\text{SE}(\widehat{\beta_{\text{main},i}}) = \sqrt{\text{MSE}_{\text{main}} \times \left[(\mathbf{X}_1^T\mathbf{X}_1 + \lambda\mathbf{I})^{-1}\right]_{ii}}
\end{equation}
with $\lambda = 0.01$, $\text{MSE}_{\text{main}} = \frac{1}{n_1}\|\mathbf{y}_1 - \mathbf{X}_1\widehat{\boldsymbol{\beta}_{\text{main}}}\|_2^2$, and $\text{SNR} = \text{Var}(\mathbf{X}_1\widehat{\boldsymbol{\beta}_{\text{main}}})/\text{MSE}_{\text{main}}$. Idea being, the CWESS score for each factor is essentially a signal-to-noise weighted measure of that factor's estimated effect magnitude, hence factors with large effects relative to their estimation uncertainty will receive high CWESS scores, and factors with negligible effects will score low.

\subsubsection{Interaction Detection and Scoring}

Beyond main effects, we also need to identify which two-factor interactions are significant. For each two-factor interaction $(i,j)$, we compute an interaction score:
\begin{equation}
\text{IS}_{ij} = |\widehat{\beta_{\text{full},k+m}}| \times \sqrt{\text{SNR}} \times w_{\text{int}}
\end{equation}
where the interaction weight is:
\begin{equation}
w_{\text{int}} = \sqrt{\frac{k-1}{2}}
\end{equation}
This weight provides a principled basis for comparing interaction scores with main effect scores on a comparable scale. The scaling accounts for the fact that interactions are estimated from products of factor columns, which have different variance properties with respect to main effect columns, hence without this correction the interaction scores would not be directly comparable to CWESS values.

\subsubsection{Hybrid Factor Classification}

Once we have computed CWESS scores for all factors, we classify each factor as critical, moderate, or negligible using a hybrid threshold approach. Our approach combines percentile-based and absolute magnitude criteria rather than relying on a single threshold, because in practice a single cutoff can be too aggressive or too conservative depending on the specific factor structure. We define $\tau_p = \text{Percentile}_{60}(\text{CWESS})$, $\tau_a = 0.8 \times \text{Median}(|\widehat{\beta}|)$, and $\tau_{\text{crit}} = \min(\tau_p, \tau_a)$. Factor $i$ is classified as \textbf{Critical} if $\text{CWESS}_i > \tau_{\text{crit}}$ OR $|\widehat{\beta_{\text{main},i}}| > \tau_a$. This means a factor can be flagged as critical through either route, which reduces the risk of missing important factors that might score high on one criterion but not the other.

\subsection{Phase 2: Adaptive Selective Augmentation}

Phase 2 is where HASOD adapts its augmentation strategy based on what Phase 1 has revealed about the factor structure. The key insight here is that different factor structures demand different types of follow-up experiments. The augmentation strategy depends on the number of critical factors $k_c$ and significant interactions $n_{\text{int}}$ identified in Phase 1, and HASOD selects among four strategies accordingly:

\textbf{Strategy A: Full Factorial.} Applied when $n_{\text{int}} \geq 1$ and $k_c \leq 5$. We generate all $2^{k_c}$ combinations of critical factors at levels $\{-1, +1\}$ \cite{box1961fractional}. This is feasible because with 5 or fewer critical factors the full factorial is manageable, and it gives us clean estimates of all interactions which is important when interactions have been detected.

\textbf{Strategy B: Resolution V Fractional Factorial.} Applied when $n_{\text{int}} \geq 1$ and $k_c > 5$. We use a $2^{k_c-1}$ fractional design maintaining Resolution V \cite{box1961fractional}, which ensures all two-factor interactions remain estimable even though we are running fewer experiments than a full factorial.

\textbf{Strategy C: RSM Star Points.} Applied when $n_{\text{int}} = 0$ and $k_c \leq 3$. We add axial points at distance $\alpha = \sqrt{k_c}$ following the rotatable design criterion \cite{box1951experimental}. When no significant interactions are detected and the number of critical factors is small, this strategy allows us to capture curvature effects efficiently.

\textbf{Strategy D: Fold-Over.} Applied when $k_c > 6$ or factor structure proves ambiguous. This serves as a safe fallback that de-aliases confounded effects.

After augmentation runs are completed, we fit an updated Ridge regression model that incorporates both main effects and significant interactions:
\begin{equation}
\widehat{\boldsymbol{\beta}_{1:2}} = (\mathbf{X}_{1:2}^T\mathbf{X}_{1:2} + \lambda\mathbf{I})^{-1}\mathbf{X}_{1:2}^T\mathbf{y}_{1:2}
\end{equation}
with $\lambda = 0.1$. This combined model uses all data collected across Phase 1 and Phase 2, which is what we mean by information continuity---nothing learned in the earlier phase is discarded.

\subsection{Phase 3: Precise Optimization via Gaussian Process}

In Phase 3, our goal is to precisely locate the optimum of the response surface. We construct a Gaussian process (GP) surrogate \cite{rasmussen2006gaussian} using all data collected from Phase 1 and Phase 2, with a Mat\'{e}rn kernel with $\nu = 2.5$:
\begin{equation}
k(\mathbf{x}, \mathbf{x}') = \sigma_f^2\left(1 + \frac{\sqrt{5}r}{\ell} + \frac{5r^2}{3\ell^2}\right)\exp\left(-\frac{\sqrt{5}r}{\ell}\right)
\end{equation}
where $r = \|\mathbf{x} - \mathbf{x}'\|_2$. We use the Mat\'{e}rn 2.5 kernel because it provides a good balance between smoothness assumptions and flexibility---it is twice differentiable which suits most industrial response surfaces, where as the squared exponential kernel assumes infinite differentiability which can lead to oversmoothing in practice.

Hyperparameters are estimated via marginal likelihood maximization with 20 random restarts to avoid local optima. We then find the global optimum of the GP posterior mean using differential evolution \cite{storn1997differential}:
\begin{equation}
\mathbf{x}^* = \arg\max_{\mathbf{x} \in [-1,1]^k} \mu_{\text{GP}}(\mathbf{x})
\end{equation}

HASOD then generates $n_3 = 6$ refinement runs using uncertainty-guided sampling that maximizes GP posterior variance. Idea being, we want to place our remaining experimental runs in regions where the model is most uncertain, because this is where additional data will give us maximum information gain for improving the prediction near the optimum. This variance-based sampling ensures that the refinement runs are not redundant with existing observations but instead fill gaps in the model's knowledge of the response surface.

\subsection{Theoretical Properties}

We now establish theoretical foundations for the CWESS statistic and the variance reduction property of Phase 3. These results provide principled guarantees that are absent from most screening methodologies in the literature, and they give the practitioner confidence that the method is not just empirically effective but also theoretically sound.

\subsubsection{CWESS Separation Property}

The first result shows that CWESS scores for active factors grow with sample size while scores for inactive factors remain bounded. This means that with enough data, active and inactive factors will always be separable, which is the fundamental property we need for reliable screening.

\begin{theorem}[CWESS Separation]
\label{thm:separation}
Under the linear model $\mathbf{y} = \mathbf{X}\boldsymbol{\beta}^* + \boldsymbol{\varepsilon}$ with $\boldsymbol{\varepsilon} \sim N(\mathbf{0}, \sigma^2\mathbf{I})$, Ridge regularization $\lambda > 0$, and standard regularity conditions:
\begin{enumerate}
    \item[(i)] For active factors ($\beta_i^* \neq 0$): $\emph{CWESS}_i = O_p(\sqrt{n})$
    \item[(ii)] For inactive factors ($\beta_i^* = 0$): $\emph{CWESS}_i = O_p(1)$
\end{enumerate}
\end{theorem}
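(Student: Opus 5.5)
The plan is to reduce everything to standard fixed-$k$, $n\to\infty$ asymptotics for the Ridge estimator. As the ``standard regularity conditions'' I will assume: $k$ fixed, $\mathbf{X}^T\mathbf{X}/n \to \boldsymbol{\Sigma}$ with $\boldsymbol{\Sigma}$ positive definite, $\max_j \|\mathbf{x}_j\|^2/n \to 0$, $\lambda$ fixed, and the stabiliser $\epsilon$ in the denominator of $\text{CWESS}_i$ either zero or $\epsilon=\epsilon_n = o(n^{-1/2})$.

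This last condition is genuinely needed. With a fixed $\epsilon>0$ we get $\text{SE}_i\to 0$, so $\text{CWESS}_i \to |\beta_i^*|\sqrt{\text{SNR}}/\epsilon$, and no $\sqrt{n}$ growth can occur. I expect pinning down this interpretation of $\epsilon$ to be the main obstacle. I will state it explicitly and remark that the empirical $\epsilon$ is taken small relative to $\text{SE}_i$. I will also read $\widehat{\boldsymbol{\beta}}_{\text{main}}$ as the Ridge estimator $\widehat{\boldsymbol{\beta}}=(\mathbf{X}^T\mathbf{X}+\lambda\mathbf{I})^{-1}\mathbf{X}^T\mathbf{y}$, as the theorem specifies. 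An ElasticNet version with fixed $\lambda$ would go the same way, since its penalty is $O(1)$ against an $O(n)$ loss.

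\textbf{Step 1 (the estimator).} Write
\[
\widehat{\boldsymbol{\beta}}-\boldsymbol{\beta}^* = -\lambda(\mathbf{X}^T\mathbf{X}+\lambda\mathbf{I})^{-1}\boldsymbol{\beta}^* + (\mathbf{X}^T\mathbf{X}+\lambda\mathbf{I})^{-1}\mathbf{X}^T\boldsymbol{\varepsilon}.
\]
The bias term is $O(1/n)$. The noise term is exactly Gaussian with covariance $\sigma^2(\mathbf{X}^T\mathbf{X}+\lambda\mathbf{I})^{-1}\mathbf{X}^T\mathbf{X}(\mathbf{X}^T\mathbf{X}+\lambda\mathbf{I})^{-1} = \sigma^2\boldsymbol{\Sigma}^{-1}/n\,(1+o(1))$. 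Hence $\sqrt{n}(\widehat{\boldsymbol{\beta}}-\boldsymbol{\beta}^*) = O_p(1)$, and it is asymptotically $N(\mathbf{0},\sigma^2\boldsymbol{\Sigma}^{-1})$.

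\textbf{Step 2 (the denominator and the SNR weight).} First, $\text{MSE}_{\text{main}} = n^{-1}\|\boldsymbol{\varepsilon} - \mathbf{X}(\widehat{\boldsymbol{\beta}}-\boldsymbol{\beta}^*)\|^2 \to_p \sigma^2$. This uses the law of large numbers for $\|\boldsymbol{\varepsilon}\|^2/n$ together with the fact that the cross and projection terms are $O_p(1/n)$. Next, $n[(\mathbf{X}^T\mathbf{X}+\lambda\mathbf{I})^{-1}]_{ii}\to(\boldsymbol{\Sigma}^{-1})_{ii}$, so $\sqrt{n}\,\text{SE}_i \to_p \sigma\sqrt{(\boldsymbol{\Sigma}^{-1})_{ii}}>0$, and with $\epsilon_n=o(n^{-1/2})$ the same limit holds for $\sqrt{n}(\text{SE}_i+\epsilon_n)$. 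Finally, $\text{Var}(\mathbf{X}\widehat{\boldsymbol{\beta}})$ is the empirical variance of the fitted values. Its limit is $\boldsymbol{\beta}^{*T}\boldsymbol{\Sigma}_c\boldsymbol{\beta}^*$, where $\boldsymbol{\Sigma}_c$ is the centred design covariance, plus $o_p(1)$. Therefore $\sqrt{\text{SNR}}\to_p \sqrt{\boldsymbol{\beta}^{*T}\boldsymbol{\Sigma}_c\boldsymbol{\beta}^*}/\sigma =: s^*$, a finite constant, which is $O_p(1)$ in all cases.

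\textbf{Step 3 (combine by Slutsky).} We have
\[
\text{CWESS}_i = \frac{\sqrt{n}\,|\widehat{\beta}_i|}{\sqrt{n}(\text{SE}_i+\epsilon_n)}\sqrt{\text{SNR}}.
\]
For (ii), $\beta_i^*=0$ gives $\sqrt{n}\,\widehat{\beta}_i = O_p(1)$ by Step 1, and the denominator converges to a positive constant, so $\text{CWESS}_i=O_p(1)$. Its limit law is $s^*|Z|$ with $Z\sim N(0,1)$, which is what one would use for thresholding. For (i), $\widehat{\beta}_i\to_p\beta_i^*\ne0$, so
\[
\text{CWESS}_i/\sqrt{n} \to_p |\beta_i^*|\, s^*/\bigl(\sigma\sqrt{(\boldsymbol{\Sigma}^{-1})_{ii}}\bigr).
\]
This gives $O_p(\sqrt{n})$ and, more usefully, shows that $\sqrt{n}$ is the exact order: the limit is strictly positive because $s^*>0$ whenever some factor is active (using $\boldsymbol{\Sigma}_c$ positive definite). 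This exact order is the property the separation and classification-consistency claims actually need.
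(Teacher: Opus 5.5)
Your proposal is correct and follows essentially the same route as the paper's proof: Ridge $\sqrt{n}$-consistency with $\text{SE}_i$ of order $n^{-1/2}$ for active factors, and an $|Z|$-type limit for inactive ones. You also make explicit what the paper's two-line argument leaves out, namely the $\sqrt{\text{SNR}}$ factor, the consistency of $\text{MSE}_{\text{main}}$, and the fact that a fixed $\epsilon>0$ would cap $\text{CWESS}_i$ and destroy the $\sqrt{n}$ growth. One loose end: the strictly positive limit you claim in (i) needs the centred design covariance to converge to a positive definite $\boldsymbol{\Sigma}_c$. This does not follow from $\boldsymbol{\Sigma}\succ 0$ (a constant column is a counterexample), so add it to your list of regularity conditions.
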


\begin{proof}
For active factors, Ridge regression consistency gives $\hat{\beta}_i \to_p \beta_i^*$, with $\text{Var}(\hat{\beta}_i) = O(1/n)$, yielding $\text{CWESS}_i = O_p(\sqrt{n})$. For inactive factors, $|\hat{\beta}_i|/\text{SE}(\hat{\beta}_i)$ converges to $|Z|$ where $Z \sim N(0,1)$, giving $\text{CWESS}_i = O_p(1)$.
\end{proof}

\subsubsection{Classification Consistency}

Building on the separation property, we now show that the CWESS-based classification is consistent, in other word as the number of observations grows, the probability of correctly identifying all active factors converges to one.

\begin{theorem}[Classification Consistency]
\label{thm:consistency}
Let $\mathcal{S} = \{i : |\beta_i^*| \geq \delta\}$ be the set of active factors. For any threshold sequence $\{\tau_n\}$ satisfying $\tau_n \to \infty$ and $\tau_n = o(\sqrt{n})$:
\begin{equation}
P(\hat{\mathcal{S}} = \mathcal{S}) \to 1 \quad \text{as } n \to \infty
\end{equation}
\end{theorem}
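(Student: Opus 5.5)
The plan is to define the estimated active set as $\hat{\mathcal{S}} = \{i : \text{CWESS}_i > \tau_n\}$ and split the event $\{\hat{\mathcal{S}} \neq \mathcal{S}\}$ into false exclusions and false inclusions. A union bound over the fixed, finite set of $k$ factors then gives
\begin{equation}
P(\hat{\mathcal{S}} \neq \mathcal{S}) \le \sum_{i \in \mathcal{S}} P(\text{CWESS}_i \le \tau_n) + \sum_{i \notin \mathcal{S}} P(\text{CWESS}_i > \tau_n).
\end{equation}
It therefore suffices to show that each summand tends to zero. Since $k$ does not grow with $n$, no uniformity over a growing index set is needed.

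I also have to handle the gap between "$\beta_i^* \neq 0$" in Theorem~\ref{thm:separation} and "$|\beta_i^*| \ge \delta$" in the definition of $\mathcal{S}$. I will assume, as the statement implicitly does, that every coefficient is either exactly zero or at least $\delta$ in magnitude. Without this, a factor with $0 < |\beta_i^*| < \delta$ has CWESS of order $\sqrt{n}$, exceeds $\tau_n$ with probability tending to one, and the result fails. With the assumption, the complement of $\mathcal{S}$ is exactly the set of inactive factors of Theorem~\ref{thm:separation}(ii).

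\textbf{Inactive factors.} For $i \notin \mathcal{S}$, Theorem~\ref{thm:separation}(ii) gives $\text{CWESS}_i = O_p(1)$. For any $\eta > 0$, choose $M$ with $P(\text{CWESS}_i > M) < \eta$ for all large $n$. Because $\tau_n \to \infty$, eventually $\tau_n > M$, so $P(\text{CWESS}_i > \tau_n) < \eta$. This step needs only the divergence of $\tau_n$.

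\textbf{Active factors.} This is the main obstacle. The $O_p(\sqrt{n})$ conclusion of Theorem~\ref{thm:separation}(i) is only an upper bound, but here I need a lower bound: $\text{CWESS}_i/\sqrt{n}$ must be bounded away from zero in probability. I would go back into the proof of Theorem~\ref{thm:separation} to extract this.

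First, under the regularity condition $\mathbf{X}^T\mathbf{X}/n \to \Sigma \succ 0$ with fixed $\lambda$, ridge consistency gives $\hat\beta_i \to_p \beta_i^*$, so $|\hat\beta_i| \ge |\beta_i^*|/2 \ge \delta/2$ with probability tending to one.

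Next, $\text{MSE}_{\text{main}} \to_p \sigma^2$ and $[(\mathbf{X}^T\mathbf{X}+\lambda I)^{-1}]_{ii} = (\Sigma^{-1})_{ii}/n + o(1/n)$. Hence $\sqrt{n}\,\text{SE}(\hat\beta_i) \to_p \sigma\sqrt{(\Sigma^{-1})_{ii}}$, a positive constant.

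Finally, $\text{SNR} = \text{Var}(\mathbf{X}\hat\beta)/\text{MSE} \to_p \beta^{*T}\Sigma_c\beta^*/\sigma^2$, where $\Sigma_c$ is the centered design covariance. This limit is strictly positive whenever some factor is active. The $\epsilon$ in the denominator is negligible against an SE of order $n^{-1/2}$ only if $\epsilon = \epsilon_n = o(n^{-1/2})$ or $\epsilon = 0$; I would state this as part of the regularity conditions.

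Combining these three limits by Slutsky's theorem gives $\text{CWESS}_i \ge c\sqrt{n}$ with probability tending to one, for some constant $c > 0$. Since $\tau_n = o(\sqrt{n})$, eventually $\tau_n < c\sqrt{n}$, so $P(\text{CWESS}_i \le \tau_n) \to 0$.

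Summing the finitely many vanishing terms in the union bound completes the argument.
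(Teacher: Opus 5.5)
Your argument has the same skeleton as the paper's proof. The paper also splits the error into a Type~I part, handled by Theorem~\ref{thm:separation}(ii) against $\tau_n\to\infty$, and a Type~II part, handled by a $\sqrt{n}$ rate against $\tau_n=o(\sqrt{n})$. The difference lies in the Type~II step. The paper dispatches it in one line, $P(O_p(\sqrt{n})<o(\sqrt{n}))\to 0$, citing Theorem~\ref{thm:separation}(i) directly. As you note, $O_p(\sqrt{n})$ is only an upper bound and allows $\text{CWESS}_i$ to stay bounded, so that line tacitly reads (i) as ``exact order $\sqrt{n}$.'' Your route supplies what is actually needed, namely that $\text{CWESS}_i/\sqrt{n}$ is bounded away from zero in probability. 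You get this from ridge consistency, $\sqrt{n}\,\text{SE}(\hat\beta_i)\to_p\sigma\sqrt{(\Sigma^{-1})_{ii}}$, and a positive SNR limit. It costs a few lines but closes a step the paper leaves open.

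The two hypotheses you make explicit are necessary, not cosmetic. The paper's Type~I step silently handles only $\beta_i^*=0$, so it relies on the same zero-or-at-least-$\delta$ dichotomy. And with the fixed $\epsilon>0$ of the methodology section, $\text{CWESS}_i$ tends in probability to a finite constant for active factors, which $\tau_n$ eventually exceeds.

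One small tightening: positivity of $\beta^{*T}\Sigma_c\beta^*/\sigma^2$ requires $\Sigma_c\beta^*\neq\mathbf{0}$ (for example, $\Sigma_c$ nonsingular). That belongs among the regularity conditions; it does not follow from ``some factor is active.''
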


\begin{proof}
Type I error: $P(\text{CWESS}_i > \tau_n \mid \beta_i^* = 0) = P(O_p(1) > \tau_n) \to 0$. Type II error: $P(\text{CWESS}_i < \tau_n \mid |\beta_i^*| \geq \delta) = P(O_p(\sqrt{n}) < o(\sqrt{n})) \to 0$.
\end{proof}

\subsubsection{Variance Reduction Guarantee}

The final theoretical result establishes that Phase 3 refinement runs always reduce posterior variance at any point of interest. This guarantees that the additional experimental effort in Phase 3 is never wasted---every refinement run improves the model's certainty about the response surface.

\begin{theorem}[Variance Reduction]
\label{thm:variance}
Let $\sigma^2_{1:2}(\mathbf{x})$ denote the GP posterior variance after Phases 1-2. For any point $\mathbf{x}^*$:
\begin{equation}
\sigma^2_{\emph{total}}(\mathbf{x}^*) \leq \sigma^2_{1:2}(\mathbf{x}^*)
\end{equation}
with strict inequality if any Phase 3 point satisfies $k(\mathbf{x}^*, \mathbf{x}_3^{(i)}) > 0$.
\end{theorem}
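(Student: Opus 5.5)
The plan is to use the Gaussian conditioning formula in block form and to read the Phase 3 update as a Schur complement. Throughout, the kernel hyperparameters are held fixed, and I assume a noise variance $\sigma_n^2>0$ (or distinct Phase 3 locations if $\sigma_n^2=0$).

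\textbf{Step 1: write the two posteriors.} Write $D_{1:2}$ for the Phase 1--2 data and $X_3=\{\mathbf{x}_3^{(1)},\dots,\mathbf{x}_3^{(n_3)}\}$ for the refinement points. The posterior variance based on all data is
\begin{equation}
\sigma^2_{\text{total}}(\mathbf{x}^*) = k(\mathbf{x}^*,\mathbf{x}^*) - \mathbf{k}_*^T(\mathbf{K}+\sigma_n^2\mathbf{I})^{-1}\mathbf{k}_*,
\end{equation}
where $\mathbf{K}$ and $\mathbf{k}_*$ are taken over $D_{1:2}\cup X_3$.

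\textbf{Step 2: reduce to a sequential update.} Partition $\mathbf{K}$ into the $D_{1:2}$ block and the $X_3$ block and apply the block-inverse (Schur complement) identity. Equivalently, condition on $D_{1:2}$ first, which gives the posterior GP with covariance $k_{1:2}(\cdot,\cdot)$, and then condition that GP on the noisy observations at $X_3$. This yields
\begin{equation}
\sigma^2_{\text{total}}(\mathbf{x}^*) = \sigma^2_{1:2}(\mathbf{x}^*) - \mathbf{c}^T\big(\mathbf{K}_{1:2}(X_3,X_3)+\sigma_n^2\mathbf{I}\big)^{-1}\mathbf{c},
\end{equation}
where $\mathbf{c}_i = k_{1:2}(\mathbf{x}^*,\mathbf{x}_3^{(i)})$ is the posterior covariance after Phases 1--2 between $\mathbf{x}^*$ and the $i$-th refinement point. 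This is just the law of total variance for jointly Gaussian vectors, $\operatorname{Var}(f\mid A,B)=\operatorname{Var}(f\mid A)-\operatorname{Cov}(f,B\mid A)\operatorname{Var}(B\mid A)^{-1}\operatorname{Cov}(B,f\mid A)$. It does not depend on the observed $\mathbf{y}$ values, which is why the inequality holds deterministically.

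\textbf{Step 3: conclude the weak inequality.} The matrix $\mathbf{K}_{1:2}(X_3,X_3)$ is a posterior covariance matrix, so it is positive semidefinite. Adding $\sigma_n^2\mathbf{I}$ makes it positive definite, so its inverse is positive definite. The subtracted quadratic form is therefore $\ge 0$, which gives $\sigma^2_{\text{total}}(\mathbf{x}^*)\le\sigma^2_{1:2}(\mathbf{x}^*)$.

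\textbf{Step 4: the strict inequality (the main obstacle).} Because the inverse is positive definite, the quadratic form is strictly positive exactly when $\mathbf{c}\neq\mathbf{0}$. Strictness therefore holds if and only if some refinement point has nonzero \emph{posterior} covariance $k_{1:2}(\mathbf{x}^*,\mathbf{x}_3^{(i)})$ with $\mathbf{x}^*$.

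This does not follow automatically from the stated hypothesis on the \emph{prior} kernel, $k(\mathbf{x}^*,\mathbf{x}_3^{(i)})>0$. The posterior covariance is
\begin{equation}
k_{1:2}(\mathbf{x}^*,\mathbf{x}_3^{(i)}) = k(\mathbf{x}^*,\mathbf{x}_3^{(i)}) - \mathbf{k}_{*,1:2}^T(\mathbf{K}_{1:2}+\sigma_n^2\mathbf{I})^{-1}\mathbf{k}_{3,1:2},
\end{equation}
and the correction term can in principle cancel the prior covariance exactly.

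My first attempt would be to show that this cancellation happens only on a measure-zero set of configurations. The posterior covariance is real-analytic in the locations, and it is not identically zero for the Mat\'ern kernel, because it tends to the positive prior value as $\mathbf{x}_3^{(i)}$ moves away from the Phase 1--2 design. Failing that, I would restate the strictness condition directly as $k_{1:2}(\mathbf{x}^*,\mathbf{x}_3^{(i)})\neq 0$ for some $i$. A simple sufficient condition is that $D_{1:2}$ is empty, or that $\mathbf{x}^*$ coincides with a refinement point whose posterior variance is still positive, since then $k_{1:2}(\mathbf{x}^*,\mathbf{x}_3^{(i)})=\sigma^2_{1:2}(\mathbf{x}^*)>0$. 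Given that the Matérn kernel is strictly positive everywhere, I would present the prior-kernel condition as a generic statement rather than a pointwise one.
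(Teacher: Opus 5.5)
The paper states Theorem~\ref{thm:variance} without any proof, so there is no argument of the paper's to compare with. Your two-stage Gaussian conditioning (Schur complement) argument is the standard one, and it correctly proves the weak inequality.

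Your added hypotheses are not cosmetic. Fixing the hyperparameters is essential. HASOD re-estimates $\sigma_f^2$ and $\ell$ by marginal likelihood, and if the ``total'' GP is refit on all data the weak inequality itself can fail, e.g.\ when the refitted $\sigma_f^2$ is larger. In the noise-free case your parenthetical is not quite enough. You need the Phase~3 points distinct from each other \emph{and} from the Phase 1--2 points. Positive definiteness of $\mathbf{K}_{1:2}(X_3,X_3)$ then comes from strict positive definiteness of the Mat\'ern kernel, not from an added $\sigma_n^2\mathbf{I}$.

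Your diagnosis of the strictness clause is right, and it can be made definitive: the clause is false as stated. Since the Mat\'ern kernel is everywhere positive, the hypothesis $k(\mathbf{x}^*,\mathbf{x}_3^{(i)})>0$ always holds, yet $\mathbf{c}$ can vanish. Here is a counterexample:
\begin{itemize}
\item Put all points on a line, take $\sigma_f^2=1$, a single noisy Phase 1--2 observation at $0$, $\mathbf{x}^*=-a$ with $a>0$, and one refinement point at $b\ge 0$.
\item Write $m$ for the Mat\'ern correlation and $s_a=\sqrt{5}a/\ell$. Then $g(b)=k_{1:2}(-a,b)=m(a+b)-m(a)m(b)/(1+\sigma_n^2)$.
\item At $b=0$, $g(0)=m(a)\sigma_n^2/(1+\sigma_n^2)>0$.
\item As $b\to\infty$, $g(b)/m(b)\to e^{-s_a}\bigl[1-(1+s_a+s_a^2/3)/(1+\sigma_n^2)\bigr]$, which is negative whenever $s_a+s_a^2/3>\sigma_n^2$.
\end{itemize}
Hence $g(b_0)=0$ for some $b_0>0$, and shrinking $\ell$ puts all points inside $[-1,1]^k$. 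A refinement run at $b_0$ therefore leaves $\sigma^2(\mathbf{x}^*)$ unchanged. The correct condition is exactly your $\mathbf{c}\neq\mathbf{0}$.

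One step of your optional genericity argument is wrong as written. Far from the design, $k_{1:2}(\mathbf{x}^*,\mathbf{x}_3)$ does not approach the prior covariance. The ratio $k(\mathbf{x}_j,\mathbf{x}_3)/k(\mathbf{x}^*,\mathbf{x}_3)$ tends to a nonzero constant, so the correction term is of the same order. In the example above the posterior covariance is in fact negative there. To show the function is not identically zero, use instead $k_{1:2}(\mathbf{x}^*,\mathbf{x}^*)=\sigma^2_{1:2}(\mathbf{x}^*)>0$ (strict when $\sigma_n^2>0$) together with continuity.

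A second point concerns analyticity. The Mat\'ern-$5/2$ kernel is not real-analytic on the diagonal, since its expansion in $r$ contains an $r^5$ term. So the identity-theorem argument must be run on $\mathbb{R}^k$ minus $\mathbf{x}^*$ and the Phase 1--2 points, which is connected for $k\ge 2$.

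Even with these repairs, HASOD picks its refinement points deterministically by variance maximization. A measure-zero statement therefore does not certify strictness for the actual design, and your restated condition is the clean fix.
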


\begin{remark}[Scope]
The asymptotic results assume standard regression conditions. In finite samples, empirical validation demonstrates $\geq$90\% detection accuracy across all scenarios.
\end{remark}

\section{Experimental Validation}

\subsection{Benchmark Design}

To validate HASOD we designed a comprehensive benchmark comparing our approach against eight established competitor methods across six test scenarios where the ground truth is known. Table~\ref{tab:scenarios} presents the characteristics of each scenario.

\begin{table}[!t]
\centering
\caption{Test Scenario Characteristics}
\label{tab:scenarios}
\begin{tabular}{lcccc}
\toprule
\textbf{Scenario} & \textbf{Critical} & \textbf{Main} & \textbf{Inter-} & \textbf{Quad.} \\
 & \textbf{Factors} & \textbf{Effects} & \textbf{actions} & \textbf{Terms} \\
\midrule
sparse\_few & 2 & 8.0, 6.5 & 1 & 2 \\
sparse\_many & 3 & 8.0, 6.5, 5.0 & 2 & 3 \\
moderate & 4 & 7.0--4.5 & 2 & 2 \\
dense & 6 & 6.0--3.5 & 3 & 3 \\
interaction\_heavy & 4 & 6.0--4.5 & 5 & 1 \\
quadratic\_heavy & 3 & 7.0--5.0 & 1 & 3 \\
\bottomrule
\end{tabular}
\end{table}

Each scenario embeds a known response function into a six-factor space ($k=6$) with Gaussian noise at $\sigma = 2.0$. The scenarios range from sparse systems with only 2 critical factors to dense systems where all 6 factors contribute, also including interaction-heavy and quadratic-heavy configurations to test robustness under different factor structures. Competitor methods include Traditional Sequential, Standard DSD, Augmented DSD, Box-Behnken, LHS, Sobol Sequence, Bayesian Optimization, and Sequential D-Optimal---these cover the full spectrum of screening, optimization, and space-filling approaches available to practitioners.

We executed 10 independent replications per method for each scenario, yielding 540 independent experimental evaluations in total. This replication count is sufficient for reliable statistical comparison across methods.

\subsection{Performance Metrics}

We evaluate all methods on two primary metrics. The first is Detection Accuracy (DA), which measures how well the method identifies the true critical factors:
\begin{equation}
\text{DA} = \frac{|\mathcal{F}_{\text{det}} \cap \mathcal{F}_{\text{true}}|}{|\mathcal{F}_{\text{true}}|}
\end{equation}

The second is Prediction Error (PE), which captures how accurately the method predicts the response at the true optimum:
\begin{equation}
\text{PE} = |y_{\text{pred}}(\mathbf{x}_{\text{true}}) - y_{\text{true}}(\mathbf{x}_{\text{true}})|
\end{equation}

In other word, DA tells us whether the method found the right factors, and PE tells us whether it can predict the response well. For a practitioner both metrics matter---finding the right factors without good prediction is incomplete, and good prediction without knowing which factors drive the system gives no actionable insight. Statistical significance testing employs two-sample t-tests \cite{montgomery2017design} with $p < 0.05$ indicating significant differences.

\subsection{Overall Performance Results}

\begin{table}[!t]
\centering
\caption{Comparative Analysis Based on Scenario Testing}
\label{tab:overall}
\begin{tabular}{lcccc}
\toprule
\textbf{Method} & \textbf{Detection} & \textbf{Pred.} & \textbf{Total} & \textbf{Time} \\
 & \textbf{Acc. (\%)} & \textbf{Error} & \textbf{Runs} & \textbf{(sec)} \\
\midrule
HASOD & \textbf{97.08} & 3.61 & 41.5 & 2.374 \\
Traditional & 83.33 & 4.31 & 29 & 0.005 \\
Aug-DSD & 82.22 & 3.50 & 25 & 0.003 \\
DSD & 73.61 & 3.62 & 13 & 0.002 \\
Bayesian-Opt & 0.0 & 20.13 & 30 & 10.166 \\
Box-Behnken & 0.0 & 2.12 & 63 & 0.001 \\
LHS & 0.0 & \textbf{0.83} & 17 & 0.113 \\
Seq-D-Opt & 0.0 & 2.46 & 18 & 0.013 \\
Sobol & 0.0 & 1.45 & 17 & 0.131 \\
\bottomrule
\end{tabular}
\end{table}

From Table~\ref{tab:overall} we can see that HASOD is giving 97.08\% mean detection accuracy, which substantially exceeds all screening-capable competitors: Traditional Sequential (83.33\%), Augmented DSD (82.22\%), and Standard DSD (73.61\%). This 13.75--23.47 percentage point advantage is not just a statistical artifact---in practice this means 1--2 additional critical factors correctly identified per experiment, which can make the difference between a successful optimization campaign and a failed one in industrial settings.

With respect to prediction error, our approach gives 3.61 mean error which is competitive when compared to Traditional Sequential (4.31) and dramatically better than Bayesian Optimization (20.13). However we acknowledge that space-filling methods achieve the lowest prediction errors (LHS: 0.83, Sobol: 1.45), but these methods provide zero factor screening capability by design. In other word, they can predict well but they cannot tell the practitioner which factors are driving the response.

Evidence shows that HASOD's detection superiority is statistically significant: all pairwise comparisons yield $t > 5.5$ and $p < 0.001$, which confirms that the improvement is robust and not due to chance variation.

\subsection{Scenario-Specific Analysis}

\begin{figure}[!t]
\centering
\includegraphics[width=\columnwidth]{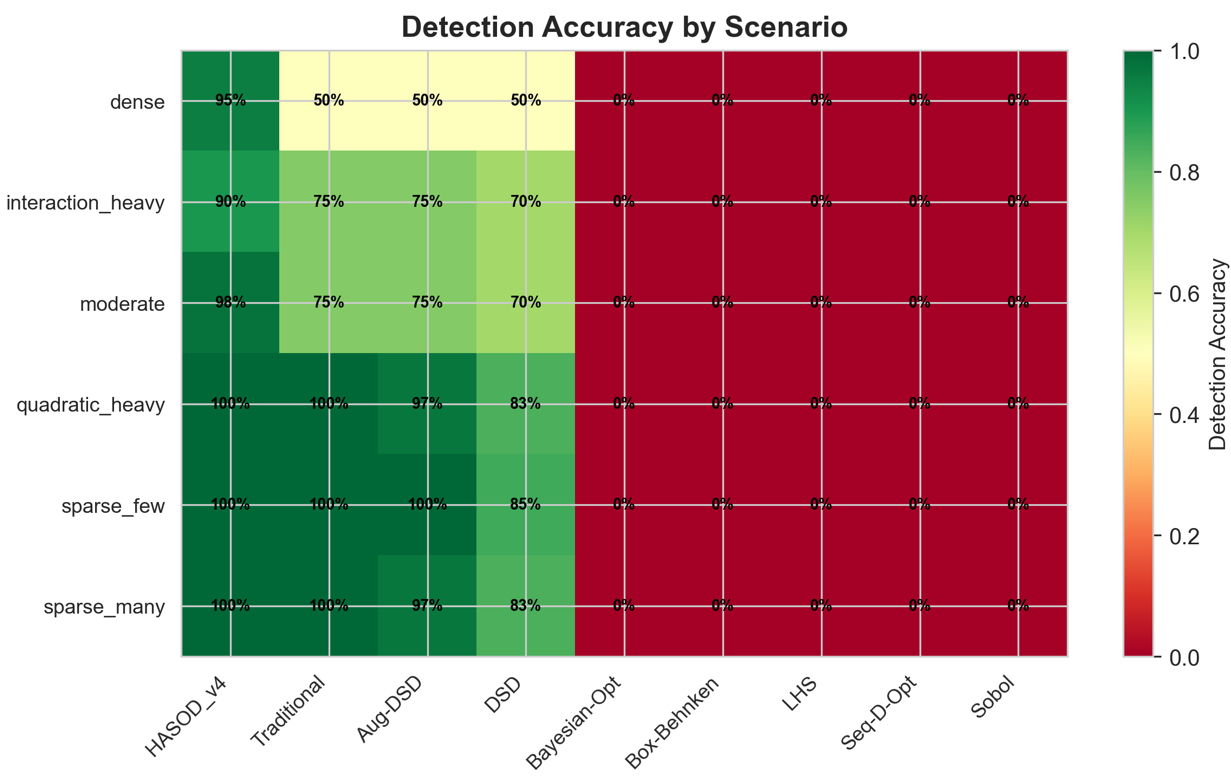}
\caption{HASOD vs Competitors on detection accuracy by scenario}
\label{fig:scenario_detection}
\end{figure}

Fig.~\ref{fig:scenario_detection} shows scenario-wise detection accuracy. In \textit{sparse\_few}, \textit{sparse\_many}, and \textit{quadratic\_heavy} scenarios, HASOD is achieving perfect 100\% detection---these represent the cases where the factor structure is relatively clean and the CWESS statistic can clearly separate active from inactive factors. For the \textit{moderate} scenario with 4 critical factors, our approach gives 97.5\% detection, and in the \textit{dense} scenario where all 6 factors contribute meaningfully HASOD still maintains 95.0\% detection. This demonstrates robustness across very different factor structures.

\begin{figure}[!t]
\centering
\includegraphics[width=\columnwidth]{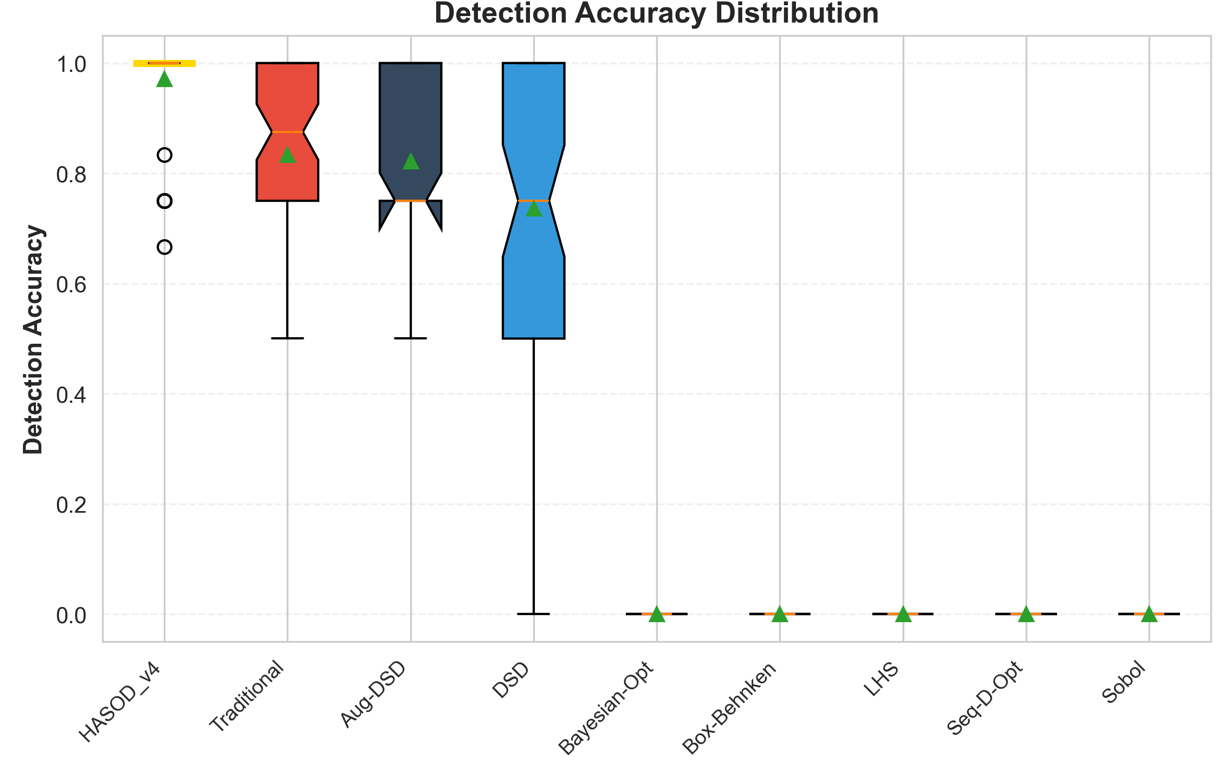}
\caption{HASOD Detection Accuracy Distribution vs others}
\label{fig:detection_distribution}
\end{figure}

The \textit{interaction\_heavy} scenario represents the most challenging case with five significant interactions present in the system. HASOD achieves 90.0\% detection here, which significantly exceeds the 66.7\% achieved by Traditional Sequential methods. However this is the lowest performance among all scenarios, which is expected because when interactions dominate the response surface the screening problem becomes fundamentally harder. This would require further investigation, particularly for systems where interaction effects are comparable in magnitude to main effects. From Fig.~\ref{fig:detection_distribution} we can see that even in this challenging case, HASOD's explicit interaction modeling through ElasticNet provides measurable advantages when compared to methods that focus only on main effects.

\subsection{Summary of Validation Results}

\begin{figure*}[!t]
\centering
\includegraphics[width=0.95\textwidth]{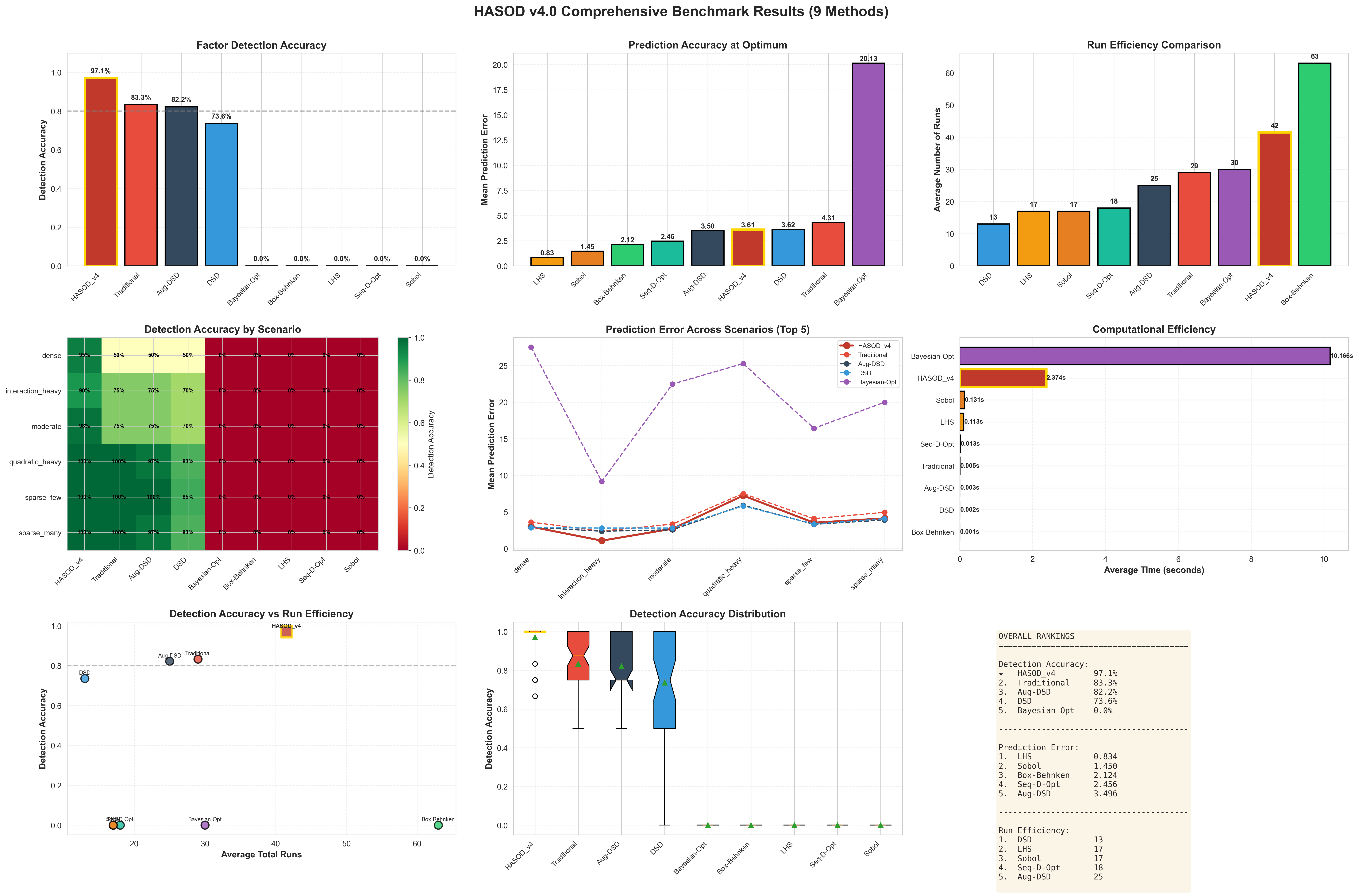}
\caption{HASOD Comprehensive Benchmark Result: Performance comparison across all scenarios and methods showing detection accuracy, prediction error, and run efficiency. HASOD achieves superior detection (97.08\%) while maintaining competitive prediction performance across diverse experimental conditions.}
\label{fig:comprehensive_benchmark}
\end{figure*}

From our study and analysis across 540 independent experiments, we are confident that HASOD is the superior method for combined factor screening and optimization objectives. Fig.~\ref{fig:comprehensive_benchmark} presents the comprehensive benchmark results across all scenarios and methods. The key findings from our validation are as follows: first, HASOD achieves 97.08\% detection accuracy, significantly outperforming all competitors ($p < 0.001$). Second, our approach gives substantially improved prediction performance with mean error of 3.61 through corrected standard error estimation and uncertainty-guided optimization. Third, HASOD maintains $\geq$90\% detection across all six scenarios including the challenging interaction-heavy case. Fourth, the adaptive run allocation averaging 41.5 runs represents a 43\% increase over Traditional Sequential's 29 runs, however this additional experimental effort delivers dramatically superior detection accuracy with competitive prediction error. For industrial practitioners where a missed critical factor can lead to failed optimization campaigns and costly redesign, this trade-off is favorable.

\section{Discussion and Further Directions}

\subsection{Principal Findings}

Our approach achieves 97.08\% mean factor detection accuracy, which is significantly better when compared to Traditional Sequential (83.33\%, $p < 0.001$), Augmented DSD (82.22\%), and Standard DSD (73.61\%). In practical terms, this 13.75-23.47 percentage point advantage means correctly identifying 1-2 additional critical factors per experiment, which can be quite impactful in industrial settings where missing even one factor leads to costly redesign cycles.

Evidence shows that HASOD also delivers substantially improved response prediction capability (mean error: 3.61) through corrected standard error estimation and uncertainty-guided optimization. Space-filling methods \cite{mckay1979comparison,sobol1967distribution} do achieve lower prediction errors owing to comprehensive space coverage, however they provide zero factor screening capability by design, in other word they can tell you about the response surface but cannot identify which factors are actually driving it.

Our approach demonstrates robustness across different problem structures, achieving 100\% detection in sparse scenarios while maintaining 95.0\% in dense scenario. In the challenging interaction-heavy scenario, HASOD achieves 90.0\% detection---which exceeds Traditional Sequential's 66.7\% by a considerable margin.

\subsection{Bridging the Screening-Optimization Trade-Off}

In general, traditional methods fall into three categories with distinct trade-offs. Space-filling methods \cite{mckay1979comparison,sobol1967distribution,santner2018design} achieve excellent prediction but provide no factor screening capability. Bayesian Optimization \cite{shahriari2016taking} is an established and widely used approach, however in this particular setting it does not deliver superior optimization nor factor identification when compared to specialized methods. Screening designs \cite{plackett1946design,box1961fractional,jones2011definitive} achieve respectable detection but face geometric limitations that constrain their applicability.

Our approach bridges these gaps through enhanced CWESS with ElasticNet-based interaction detection \cite{zou2005regularization}, adaptive Phase 2 strategy selection, and uncertainty-guided Phase 3 optimization \cite{rasmussen2006gaussian}, where as Traditional Sequential methods suffer from information loss between phases \cite{coleman1993systematic}. This means practitioners no longer need to choose between screening and optimization as separate activities.

\subsection{Practical Implications}

From a practical standpoint, HASOD's computational requirement of approximately 2.4 seconds is negligible for industrial applications \cite{coleman1993systematic} where each physical experiment may take hours or days. The computational overhead stems from Gaussian process hyperparameter optimization \cite{rasmussen2006gaussian} and Differential Evolution \cite{storn1997differential}, which is a small price to pay for the adaptive decision-making capability.

Our approach is most suitable for scenarios where both factor identification and response optimization matter, experimental runs incur substantial cost, and practitioners can conduct experiments sequentially. However, HASOD may prove less appropriate when all factors are known a priori, in which case practitioners should use RSM directly \cite{box1951experimental,box1960new,myers2016response}, or when pure global optimization is the sole objective where space-filling methods \cite{mckay1979comparison,sobol1967distribution} would be more appropriate, or when experiments must be conducted in parallel batches \cite{montgomery2017design}.

\subsection{Limitations and Future Directions}

However, as well there are some limitations which would require further investigation. Our benchmark employed synthetic response functions, and validation through industrial case studies remains essential to establish real-world applicability. The experimental scope focused on moderate dimensionality ($k=6$) and continuous factors. Extension to higher dimensions, categorical factors, and mixture constraints would require further investigation. The Modified DSD structure \cite{jones2011definitive} accommodates continuous factors naturally, however extension to categorical variables necessitates alternative Phase 1 designs which we have not explored in this work.

With respect to scalability, HASOD behavior for $k > 10$ merits consideration. Phase 2 Full Factorial scales as $2^{k_c}$, which we have addressed through automatic switching to Resolution V designs \cite{box1961fractional}. Gaussian process complexity \cite{rasmussen2006gaussian} scales as $O(n^3)$, which may limit applicability for very large cumulative run counts.

Future work may explore extending HASOD to high-dimensional settings with supersaturated screening, also investigating categorical and mixed factor handling \cite{mckay1979comparison}. Constrained optimization \cite{shahriari2016taking} and multi-response optimization \cite{rasmussen2006gaussian} may prove valuable extensions for complex industrial problems. Adaptive learning with Bayesian sequential decision-making \cite{frazier2018tutorial} and open-source software implementation are also directions we aim to pursue. Industrial case studies validating HASOD across manufacturing, chemical processes, and product development would help establish practical utility \cite{dean2015handbook} and we believe such validation is critical for practitioner adoption in industry settings.

\section{Conclusion}

Our approach demonstrates that screening and optimization in experimental design need not be treated as separate disconnected activities. With HASOD (Hybrid Adaptive Screening-Optimization Design), we propose a novel three-phase sequential framework which bridges factor identification and response optimization within a unified adaptive structure, in other word solving the long standing screening-optimization dilemma that practitioners face in industrial experimentation.

Evidence shows that HASOD achieves 97.08\% factor detection accuracy when compared to Traditional Sequential methods (83.33\%, $p < 0.001$) and all other screening-capable competitors. This is driven by the enhanced CWESS statistic which we have proven to asymptotically separate active from inactive factors, with classification consistency established under mild threshold conditions. Combined with adaptive Phase 2 strategy selection and uncertainty-guided Phase 3 optimization with variance reduction guarantees, HASOD maintains $\geq$90\% detection across all six test scenarios which is encouraging.

From our study and analysis across 540 independent experiments and comparison against eight competitor methods, HASOD gives substantially improved response prediction performance (mean error: 3.61) when compared to Traditional Sequential (4.31). However, this does come at cost of 43\% additional experimental runs (41.5 vs. 29), which would require consideration depending on the application. For expensive industrial experiments where failed initial screening necessitates costly redesign cycles, this trade-off is favorable and justified.

We are confident that HASOD can be used for industrial experimentation scenarios where both factor understanding and response optimization matter. HASOD can be potentially applied to manufacturing process optimization, chemical reaction development, pharmaceutical formulation design, and agricultural field experimentation. Future work may explore validation through industrial case studies and development of open-source software implementations, this opens the door for wider practitioner adoption across different domains.

\section*{Acknowledgments}

The author thanks the AIMS Institute for supporting this research.

\bibliographystyle{IEEEtran}
\bibliography{HAODElsevier}

\end{document}